\newcommand{\refcheckize}[1]{
	\expandafter\let\csname @@\string#1\endcsname#1%
	\expandafter\DeclareRobustCommand\csname relax\string#1\endcsname[1]{%
	\csname @@\string#1\endcsname{##1}\wrtusdrf{##1}}%
	\expandafter\let\expandafter#1\csname relax\string#1\endcsname
}
\newaliascnt{lemma}{theorem}
\newtheorem{lemma}[lemma]{Lemma}
\newcommand\capitalize[1]{\char\the\uccode\expandafter`#1} 
\newcommand\Capitalize[1]{\edef#1{\capitalize#1}}
\def\k_#1#2{k_\textit{#1#2}}
\def\p_#1#2{\Delta\phi_\textit{#1#2}}
\newcommand\xl{|x_L|}
\newcommand\yl{|y_L|}
\newcommand\yr{|y_R|}
\newcommand\xr{|x_R|}
\newcommand\nl{\\&\textstyle}
\title{A Linear Potential Function for Pairing Heaps}
\author{John Iacono \and Mark Yagnatinsky}
\begin{document}
\maketitle

\begin{abstract}
We present the first potential function for pairing heaps with linear range.  This implies that the runtime of a short sequence of operations is faster than previously known.  It is also simpler than the only other potential function known to give amortized constant amortized time for insertion.
\end{abstract}

\section{Introduction}
The pairing heap is a data structure for implementing priority queues introduced in the mid-1980s~\cite{pair}.  The inventors conjectured that pairing heaps and Fibonacci heaps~\cite{fib} have the same amortized complexity for all operations.  While this was eventually disproved~\cite{LBfred}, pairing heaps are, unlike Fibonacci heaps, both easy to implement and also fast in practice~\cite{bench}.  Thus, they are widely used, while their asymptotic performance is still not fully understood.  Here, we show that a short sequence of operations on a pairing heap takes less time than was previously known.

\subsection{A review of amortization and potential functions}
Some applications of data structures require that the worst-case time of operations be bounded, even at the expense of the average case.  For instance, a text editor that takes an average of a millisecond to respond to a user's keystrokes will not be popular if a keystroke occasionally takes a minute to process.  Far better that the average be increased to two milliseconds, if by doing so we can guarantee that no keystroke will take more than four.  (Note: any connection to reality that these numbers have is utterly accidental.)

But many applications of data structures are in batch algorithms, and in that context, the worst-case runtime of any particular operation on the data structure doesn't matter.  An algorithm performs not one operation, but a sequence of them, and it is the runtime of the entire sequence that matters.

Consider a sequence of operations on some data structure, and let $t_i$ denote the time to execute the $i$th operation in this sequence.  The goal of amortized analysis is to assign to each operation an \emph{amortized time} $a_i$, such that the sum of the amortized times is at most the sum of the actual times.  There is of course an easy way to achieve this, by setting $a_i=t_i$.  The real goal is to come up with times that are somehow more convenient, usually by showing an amortized time that is a function of the current size of the structure, but not its detailed internal state.

A key tool in amortized analysis is the \emph{potential function}.  Given a sequence of $m$ operations $o_1,o_2,\dots, o_m$ executed on a particular data structure, and an integer $i\ge0$, a potential function is a function $\Phi$ that maps $i$ to a real number $\Phi_i$.  The real number $\Phi_i$ is called the \emph{potential} after the $i$th operation.  Given $\Phi$, we define the amortized time $a_i$ of an operation $o_i$ as the actual time $t_i$ of the operation, plus the change in potential $\Phi_i-\Phi_{i-1}$.  Often, instead of explicitly providing the mapping from integers to reals, a potential function is specified implicitly: given the current state of the data structure, there is an algorithm to calculate the current value of the potential function, without needing to know how many operations have taken place.  Another common approach (which we use in combination with the previous one) is to specify $\Phi_0$, and then specify how to update $\Phi$ after an operation.

Observe that since $a_k = t_k + \Phi_k - \Phi_{k-1}$, we have $t_k = a_k - \Phi_k + \Phi_{k-1}$.  Thus, the total time taken for the subsequence of consecutive operations from $i$ to $j$ is
\begin{align*}\textstyle
\sum_{k=i}^j t_k&\textstyle
= \sum_{k=i}^j (a_k - \Phi_k + \Phi_{k-1})\nl
= \sum_{k=i}^j a_k - \sum_{k=i}^j \Phi_k + \sum_{k=i}^j \Phi_{k-1}\nl
= \sum_{k=i}^j a_k - \sum_{k=i}^j \Phi_k + (\Phi_{i-1} + \sum_{k=i+1}^j \Phi_{k-1})\nl
= \sum_{k=i}^j a_k - \sum_{k=i}^j \Phi_k + \Phi_{i-1} + \sum_{k=i}^{j-1} \Phi_k\nl
= \Phi_{i-1} - \Phi_j + \sum_{k=i}^j a_k.
\end{align*}
From this formula, we can derive the motivation for our result: namely, that a potential function with a small range is useful.  (When we speak of the range of a potential function, we mean its maximum value as function of $n$ minus its minimum value.  Often the minimum is zero, and thus the range is simply the maximum value.)  To see this, suppose you have a data structure with $n$ elements, and you perform a sequence of $k$ operations on it that don't change the size.  If each operation takes $O(\log n)$ amortized time, then the total actual time is bounded by $O(k\log n)$ plus the loss of potential.  Thus, if the range of the potential function is $O(n\log n)$, then the total time is $O(k\log n + n\log n)$, but if the range of the potential function is linear, this is improved to $O(k\log n + n)$, which is asymptotically better whenever $k$ is $o(n)$.  Thus, a reduced range of a potential function improves the time bounds for short sequences that don't start from an empty data structure.

\subsection{Heaps and priority queues}
A priority queue is an abstract data type that maintains a totally ordered set under the following operations:
\begin{description}
\item[$Q =$ make-pq$()$:] Create an empty priority queue $Q$.
\item[$p =$ insert$(Q,x)$:] Insert key $x$ into $Q$, and return a handle $p$ which can be passed to decrease-key().
\item[$x =$ get-min$(Q)$:] Return the minimum key currently stored in $Q$.
\item[delete-min$(Q)$:] Delete the minimum key currently stored in $Q$.
\item[decrease-key$(p, y)$:] Decrease the key at $p$ to $y$.  Precondition: $y$ must be strictly less than the original key.
\end{description}
Priority queues come up in such a wide variety of applications that they are covered in (nearly?) all data structure textbooks.  As already mentioned, Dijkstra's shortest path algorithm (and its close relative, Prim's algorithm for minimum spanning trees) relies on a good priority queue for efficiency on sparse graphs.  As another example, the heapsort sorting algorithm (insert all items into a heap, then remove them one by one) is an efficient in-place sorting algorithm given the right priority queue.  (Specifically, when using a binary heap.)

The simplest implementation of a priority queue is probably a singly linked list.  Initializing a new heap is trivial, and inserting a new item is simply an append, which takes constant time.  If we return a pointer to the node an item is stored in, we have a suitable handle to perform decrease-key in constant time.  Unfortunately, finding the minimum item takes linear time.  Likewise, deleting the minimum item also takes linear time, since we must first find it.  If we modify the insert operation to maintain the current minimum, then the current minimum can be found in constant time, but deletion still takes linear time, because we then have to find the \emph{new} minimum.  For this reason, this implementation is rarely used if the priority queue has any chance of being the algorithmic bottleneck.

Another approach is to use a balanced binary search tree.  Its advantage over using a linked list is that deletion of the minimum now takes only logarithmic time, instead of linear.  The disadvantage is that insertion slows down to logarithmic, as does decrease-key.  Furthermore, the work required to implement a binary search tree noticeably exceeds that needed for a linked list or array.

The problem is that a binary search tree is trying too hard: it must be ready to support finding an arbitrary element at all times, despite the fact that we will only ever ask it for the smallest.  Meanwhile, the unordered list has the opposite problem of trying too little.  It turns out that instead of a search tree, it helps to store the set of items in a \emph{heap}.  There are many varieties of heaps, but they all have the following property in common.  Like binary search trees, they store the elements of the set in nodes of a rooted tree (or sometimes a forest of such trees).  Unlike search trees, heaps have the invariant that all children of a node have larger key values than their parent.

\subsection{Pairing heaps}
A \emph{pairing heap}~\cite{pair} is a heap-ordered general rooted ordered tree.  That is, each node has zero or more children, which are listed from left to right, and a child's key value is always larger than its parent's.  The basic operation on a pairing heap is the \emph{pairing} operation, which combines two pairing heaps into one by attaching the root with the larger key value to the other root as its leftmost child.  For the purposes of implementation, pairing heaps are stored as a binary tree using the leftmost-child, right-sibling correspondence.  That is, a node's left child in the binary tree corresponds to its leftmost child in the general tree, and its right child  in the binary tree corresponds to its right sibling in the general tree.  In order to support decrease-key, there is also a parent pointer which points to the node's parent in the binary representation.  Priority queue operations are implemented in a pairing heap as follows:
\begin{description}
\item[make-heap():] return null
\item[get-min($H$):] return $H$.val
\item[insert($H, x$):] create new node $n$ containing $x$; If the root $H$ is null, then $n$ becomes the new root; if $H$ is not null then pair $n$ with $H$ and update root; return pointer $p$ to the newly created node.
\item[decrease-key($p, y$):] Let $n$ be the node $p$ points to.  Set the value of $n$'s key to $y$, and if $n$ is not the root, detach $n$ from its parent and pair it with the root
\item[delete-min($H$):] remove the root, and then pair the remaining trees in the resultant forest in groups of two.  Then incrementally pair the remaining trees from right to left.  Finally, return the new root.  See \autoref{fig:delete} for an example of a delete-min executing on a pairing heap.  (Readers familiar with splay trees may notice that in the binary view, a delete-min resembles a splay operation.)
\end{description}
All pairing heap operations take constant actual time, except delete-min, which takes time linear in the number of children of the root.  Pairing heaps naturally support one more operation in constant time: merge.  This takes two independent heaps and pairs them.  Unfortunately, this takes amortized linear time using our potential function.

\begin{figure}
\def\fig#1#2{\hfil\includegraphics{del#1}\hfil

#2}
\def\w{\dimexpr.5\textwidth-2\tabcolsep}
\def\nl{\\\hline\\[-.8\normalbaselineskip]}
\begin{tabular}{p{\w}|p{\w}}
\fig1{\hfil (a) Remove the root.}
&
\fig2{(b) The first pass groups the nodes in pairs, and pairs them.}
\nl
\fig3{(c) The second pass repeatedly pairs the right two nodes until a single tree is formed.}
&
\fig4{\hfil (d) Second pairing pass, continued.}
\nl
\fig5{\hfil (e) Second pairing pass, continued.}
&
\fig6{\hfil (f) Final heap after a delete-min.}
\end{tabular}
\caption{Delete-min on a heap where the root has eight children.}\label{fig:delete}
\end{figure}

\subsubsection{History}
Pairing heaps were originally inspired by splay trees~\cite{splay}.  Like splay trees, they are a self-adjusting data structure: the nodes of the heap don't store any information aside from the key value and whatever pointers are needed to traverse the structure.  This is in contrast to, say, Fibonacci heaps~\cite{fib}, which store at each node an approximation of that node's subtree size.  Fibonacci heaps support delete-min in logarithmic amortized time, and all the other heap operations in constant amortized time.  However, they are complicated to implement, somewhat bulky, and therefore slow in practice~\cite{exp}.  Pairing heaps were introduced as a simpler alternative to Fibonacci heaps, and it was conjectured that they have the same amortized complexity for all operations, although~\cite{pair} showed only an amortized logarithmic bound for insert, decrease-key, and delete-min.  The conjecture was eventually disproved when it was shown that if insert and delete-min both take $O(\log n)$ amortized time, an adversary can force decrease-key to take $\Omega(\log\log n)$ amortized time~\cite{LBfred}.

\subsubsection{Present}
Nevertheless, pairing heaps are fast in practice.  For instance, the authors of~\cite{bench} benchmarked a variety of priority queue data structures.  They also tried to estimate difficulty of implementation, by counting lines of code, and pairing heaps were essentially tied for first place by that metric, losing to binary heaps by only two lines.  Despite (or rather because of) their simplicity, pairing heaps had the best performance among over a dozen heap variants in two out of the six benchmarks.  In one of the benchmarks in which pairing heaps did not come in first, they were within ten percent of the performance of the heap which did, and in two others, they were within a factor of two of the best.  The one benchmark in which they did poorly was pure sorting (add everything to the heap and then remove it), where they were over four times slower than the fastest heap.  (Although it might be worth pointing out that the heap that won that benchmark does not support decrease-key in sub-linear time at all, so the comparison is not quite apples-to-apples, especially since it is possible to save one pointer per node in a pairing heap if you know you won't perform decrease-key.  Pairing heaps were only thirty percent slower in the sorting benchmark than the fastest heaps that did support decrease-key.)

\subsection{Previous work and our result}
In~\cite[Theorem~3]{cole}, Cole develops a linear potential function for splay trees (that is, the potential function ranges from zero to $O(n)$), improving on the potential function used in the original analysis of splay trees, which had a range of $O(n\log n)$~\cite{splay}.  As explained above, this allows applying amortized analysis over shorter operation sequences.

There are several variants of pairing heaps such as~\cite{sort} and~\cite{elm}, and one of them also has a potential function that is $o(n\log n)$~\cite{sort}.  The main theme in all the variants is to create a heap with provably fast decrease-key, while maintaining as much of the simplicity of pairing heaps as possible.

\paragraph{Our result.}  We present a potential function for pairing heaps that is much simpler than the one found for splay trees in~\cite{cole} and also simpler than the only previously known potential function for pairing heaps that is $o(n\log n)$~\cite{pet}.  Further, it is simpler than the only other potential function known to give constant amortized time for insertion~\cite{o1}, and perhaps more importantly, it is the first potential function for pairing heaps whose range is $O(n)$, which allows the use of amortized analysis to bound the run times of shorter operation sequences than before.  In the case of pairing heaps, this bound on the potential function range is asymptotically the best possible, since the worst-case time for delete-min is linear, and thus we need to store at least a linear potential to pay for it.

\begin{table}
\newcommand\pet{4^{\sqrt{\lg\lg n}}}
\newcommand\SV{Stasko/Vitter }
\newcommand\ELM{Elmasry }
\footnotesize
\centerline
{\begin{tabular}{@{} ll ll l@{}}
Result					&	Range				&	Insert		&	Decrease-key	&	Delete-min\\
\hline
Pairing heap \cite{pair}&	$\Theta(n\lg n)$	&	$O(\lg n)$	&	$O(\lg n)$&		$O(\lg n)$\\
Pairing heap \cite{pet}&	$O(n\cdot\pet)$	&	$O(\pet)$	&	$O(\pet)$&			$O(\lg n)$\\
Pairing heap \cite{o1}&	$O(n\lg n)$		&	$O(1)$		&	$O(\lg n)$&		$O(\lg n)$\\		
Pairing heap [This paper]&$\Theta(n)$		&	$O(1)$		&	$O(\lg n)$&		$O(\lg n)$\\
\hline
\SV\cite{exp} &			$O(n\lg n)$		&	$O(1)$		&	$O(\lg n)$&		$O(\lg n)$\\
\ELM\cite{elm,elm2}&		$O(n\lg n)$		&	$O(1)$		&	$O(\lg\lg n)$&		$O(\lg n)$\\
Sort heap \cite{sort}&		$\Theta(n\lg\lg n)$&	$O(\lg\lg n)$&	$O(\lg\lg n)$&		$O(\lg n\lg\lg n)$\\	
\hline
Binomial heap \cite{bin}&	$\Theta(\lg n)$&		$O(1)$		&	$O(\lg n)$&		$O(\lg n)$\\
Fibonacci heap \cite{fib}&	$\Theta(n)$ &			$O(1)$		&	$O(1)$&			$O(\lg n)$\\
Rank-pairing heap \cite{rank}&$\Omega(n)$&	$O(1)$		&	$O(1)$&			$O(\lg n)$
\end{tabular}}
\caption{Various heaps and amortized bounds on their running times.  Top: analyses of pairing heaps.  Middle: close relatives of pairing heaps.  Bottom: more distant relatives of pairing heaps.  Note: $\lg=\log_2$.}
\label{table}
\end{table}

\paragraph{Previous work.}  In \autoref{table}, we list the amortized operation costs and ranges of several potential functions.  Each row of the table corresponds to a single analysis of a specific heap variant.  The table is divided into three parts.  The top part is devoted to analyses of pairing heaps.  The middle is for variants of pairing heaps, and the bottom is for heaps that are sufficiently different from pairing heaps that calling them a variant seems inaccurate.  It is of course somewhat subjective whether a result belongs in the second or third group.  For instance, a case could be made that rank-pairing heaps could go either way.  Also, note that the use of $O()$ notation in the Range column is deliberate.  First, because the hidden constants in the various results differ, sometimes dramatically.  Second, because most papers do not state the range of the potential function (with~\cite{sort} being a notable exception), using $\Theta()$ would force us to prove matching upper and lower bounds for each heap in the table, which would blow up the size of this subsection, which would then probably more than double the length of this paper.  There are several cases in which we did use $\Theta$; this is not meant to imply that the potential function is always this large, only that there exists at least one family of heaps for which it is that large.  For instance, the classic potential function for pairing heaps is $\Theta(n\log n)$ because inserting $n$ items in sorted order (increasing or decreasing both work) creates a heap with a potential that large, although some other operation sequences do not.  Below we justify the correctness of the Range column of the above table; for the other columns, see the respective papers.

We use $\lg$ for $\log_2$ hereafter.

\subparagraph{Classic pairing heap potential \cite{pair}:}  The upper bound here is trivial, since the heap potential is the sum of the node potentials, and the potential of a node is the binary logarithm of the size of its subtree in the binary view.  To get the lower bound, consider insertion of a sorted sequence (increasing or decreasing).  Then in the binary view, all nodes (except the one leaf) have exactly one child.  Thus, at least half of all nodes have a subtree size of $n/2$, and thus a potential of $\lg\frac n2 = \lg n - 1$, which makes the heap potential at least $\frac n2 (\lg n - 1) = \frac12 n\lg n - n/2$, which is $\Omega(n\log n)$.

\subparagraph{Seth Pettie's analysis \cite{pet}:}  Here the potential function is, to say the least, not very simple.  As a warm-up, the paper starts with a simpler potential function which is far simpler to analyze. (Even the simple one has some interesting features, such as depending on the current state of the heap, and on its past, and on the future.) Deriving bounds for the general one is left as an exercise for the reader.  (A grueling exercise, at that.)  As for the simpler one, a node's potential is clearly upper-bounded by $2\sqrt{\lg n}$, so it follows that the range is at most $n$ times that.  (Actually this does not follow immediately, since the potential function has another term which we're neglecting, but this term is too small to matter in this case.)  As for the lower bound, consider inserting $n$ items into the heap in increasing order.  (Decreasing order doesn't work here, because the potential function is based on the general view of the heap instead of the binary view, so it is less symmetric.)  Then at least half of all nodes have $\hbar=\lg\frac n2 = \lg n - 1$ (see the paper for the definition of $\hbar$), and thus a potential of $\Omega(\sqrt{\lg n})$.  Therefore, the total heap potential would be $\Theta(n\sqrt{\lg n})$.  Conjecture: the same construction works for the fancy potential function, which would give a bound of $\Theta(n\cdot4^{\sqrt{\lg\lg n}})$.

\subparagraph{John Iacono's analysis \cite{o1}:}  The upper bound of $O(n\log n)$ follows immediately from the definition of the potential function (which we omit).  To show that this bound is tight, some background will be needed.  This potential function depends on the future even more strongly than that of Seth Pettie.  For this potential function, it makes little sense to speak of the amortized run time of a single operation, even in a sloppy, informal sense, unless one has in mind, at least implicitly, the operation sequence that this operation is part of.  Thus, if we wish to speak of the range of this potential function, it makes sense to look at its value at the beginning, and at the end, and then subtract the two.  All intermediate values are essentially internal to the analysis.  And at the end of an operation sequence, the potential is always at least zero and at most linear.  However, the potential function depends only on the future and not the past, so it is legitimate to start the operation sequence from a nonempty heap.  If we start it from the heap that results from inserting $n$ items in decreasing order, and let the operation sequence be ``delete-min, $n$ times,'' then the initial potential equals $-\Theta(n\log n)$ (yes, negative!), and thus the bound is tight.  Note, however, that this is not as interesting as it seems.  Recall that the point of having a small range is to gain the ability to amortize over short sequences of operations.  But we have created a sequence of $n$ delete-min operations, which is not short, since such a sequence naturally takes $\Theta(n\log n)$ time anyway, so the large range does no harm.  Thus the tightness of the upper bound remains mysterious.

\subparagraph{Stasko and Vitter's variant(s) \cite{exp}:}  This paper actually defines two variants of pairing heaps.  They both use the same potential function, and the upper bound follows easily since the potential function is nearly identical to the classic potential of~\cite{pair}.  The variants differ in how decrease-key is implemented.  The first variant uses a very simple implementation of decrease-key, similar in spirit to classic pairing heaps.  Unfortunately, the authors were not able to analyze this variant, beyond the fact that insertion takes constant time, and deletion takes logarithmic amortized time, if no decrease-key operation is ever performed.  The second variant is mentioned only in passing: it uses a relatively heavy-weight implementation of decrease-key.  In this case, they were able to show that decrease-key takes at most logarithmic amortized time (although this result was merely sketched).  Both variants are robust against the sorts of tricks we used to establish the lower bounds of the preceding potential functions.  Conjecture: there exists a sequence of insert, decrease-key, and delete-min operations that raise the potential of this heap to something more than linear, and most likely all the way to the upper bound of $O(n\log n)$.

\subparagraph{Amr Elmasry's variants:}  The potential in~\cite{elm} is $O(n\log n)$ by inspection.  It appears to be even harder to prove this bound is tight than in Stasko and Vitter's variant, for similar reasons.

In the case of~\cite{elm2}, an upper bound of $O(n\log n)$ is also easy to see by inspection.  However obtaining a good lower bound is harder.  The amortization argument is somewhat complex, using a combination of potential, credits, and debits.  The credits are mostly just terms in the potential function by another name: some operations create credits and later operations use them.  The potential functions, as written, would cause insertion into the pairing heap to take more than constant time.  (And likewise for melding.)  The purpose of the debits is to get around this problem, by making later deletions subsidize the earlier insertions.  Rephrasing this in terms of potential functions is not as natural as in the case of credits, but should be doable.  Once this is done, it is by no means clear how to construct a heap with large~potential.\looseness-1

\subparagraph{John Iacono's variant: sort heaps \cite{sort}.}  Here the upper and lower bounds are again easy.  The heap potential is the sum of the node potentials, and the potential of a node is at most $\lg\lg n$.  This is tight: insert $n$ nodes in increasing order, and one third of them will be right heavy, and thus have a potential of $\lg\lg n$.

\subparagraph{Binomial Heaps \cite{bin}:}  This heap barely qualifies for inclusion into the table, since it performs all operations in worst-case logarithmic time, and we are only interested in amortized heaps.  But, in fact, the amortized time of insertion is $O(1)$, and the proof is so simple that we will sketch it here.  A binomial heap is an ordered forest of rooted trees, each satisfying the heap-order property (that each child's key exceeds that of its parent).  The size of each tree is a power of two, and no two trees have the same size. (The forest is sorted by tree size.) This set of constraints is already enough to allow us to deduce the structure of the heap given only its size, $n$: simply write $n$ in binary and if, say, the eights place has a 1, then the binomial heap will have a tree of size 8, otherwise it won't.  If, in the course of performing an operation, the unique size condition is violated, the two trees with the same size are merged into one.  (This is analogous to the pairing operation in pairing heaps, but in the context of binomial heaps is called merging instead.)  Thus, insertion is performed by adding a tree of size 1, and then performing as many merges as needed to restore the unique size condition.  If we let the potential be the number of trees in the forest, then it is trivial to show that insertion takes $O(1)$ amortized time: any merge takes one unit of work and releases one unit of potential, so they are effectively free.  The range of the potential function is clearly logarithmic, since there are no more than $\lg(n+1)$ trees in the heap (and usually less, unless $n$ is one less than a power of two).

\subparagraph{Fibonacci heaps: \cite{fib}.}  A delete-min in a Fibonacci heap may take linear time, so the potential range must be at least linear or else the potential function can't even be used to prove that a Fibonacci heap can sort $n$ items in $O(n\log n)$ time.  The potential function for Fibonacci heaps is almost as simple as that for binomial heaps: the number of trees in the forest, plus twice the number of marked nodes.  (The potential for marked nodes is only important for proving that decrease-key takes constant amortized time.)  Clearly the range is never more than linear, since one can't do worse than marking every node and placing it in its own tree.

\subparagraph{Rank-pairing heaps: \cite{rank}.}  Again, the potential must be at least linear.  The paper analyzes two variants of rank-pairing heaps, and in both cases the analysis is involved.  Here we make a few superficial observations.  The potential function depends only on the current state of the structure, rather than knowing the past or the future.  Of the three primary operations we are concerned with, delete-min in general releases potential.  Likewise, decrease-key in general needs to release potential, since in this heap, as in Fibonacci heaps, the worst-case time is linear.  Thus, the only operation that allows us to build up potential is insertion, and its amortized cost is constant.  Thus we should expect a typical rank-pairing heap to have linear potential.  Unfortunately, in the worst case, this kind of argument proves nothing.  For instance, even though a typical decrease-key should release potential, there could be a sequence of cleverly chosen decrease-key operations such that each of them costs a unit of potential (but no more than that, decrease-key takes constant amortized time).  Given a long enough such sequence, the heap potential would eventually rise to more than linear.  Proving such a sequence does not exist requires looking at the potential function in detail.  Without that, at best we can say it seems implausible, since such a sequence will be applying decrease-key to the same nodes repeatedly, which is not going to change the heap structure, since they are already roots, unless some delete-min operations are mixed in.  Neither can we rule out a clever sequence of delete-min operations, each raising the potential by $O(\log n)$; such a sequence can't be very long, since eventually the heap will be empty and have a potential of zero.  However, we again recall that the reason we care about this at all is that we wish to amortize over short sequences.  Such tricks, even if they were to work, are nearly impossible to pull off with a short sequence.  (But not quite impossible: imagine a sequence of $n/\sqrt{\lg n}$ deletions, each raising the potential by $\lg n$.  At the end, we would have a heap of size $n-o(n)$ with potential $n\sqrt{\lg n}$.)

\paragraph{The analysis of pairing heaps presented in this paper:}  In \autoref{linear} we present a new potential function for pairing heaps and prove that its range is linear.  This is optimal, because we must release linear potential during a delete-min in the worst case, or else the amortized time of delete-min would equal its actual time, which is linear, whereas a good heap should have logarithmic amortized time for delete-min.

\section{Towards a functional potential function}
Our goal is to construct a potential function which is always between zero and $O(n)$, that gives us logarithmic amortized time for delete-min, where $n$ is the current size of the heap.  Since the classic potential function already achieves the second goal, it is natural to attempt to tweak it somehow to achieve the first one without breaking anything.  The classic potential function assigns each node $x$ a potential of
$\lg|x|$, where $|x|$ is the size of the subtree rooted at $x$ in the binary view.  Suppose we simply try to scale down: the classic potential is too large by a logarithmic factor, so we assign each node a potential of
$\lg|x|\/\lg n$ instead.  Unfortunately, this means that a delete-min releases only $O(n/\log n)$ potential, leading to a linear amortized cost.  Another idea, if we don't mind randomization, is to flip a biased coin each time a node is inserted into the heap, and, if it lands on tails, then that node will have the classic potential, and otherwise none at all.  But if the coin is biased enough to get the potential low enough, then most nodes in the heap will have no potential at all and the amortized cost of delete-min will again be too high.

Let's try something else.  Some nodes have very small subtrees, and thus don't contribute much to the potential anyway.  Suppose we assign a potential of zero to a node if its subtree size is less than $\lg n$.  Pairing two such small nodes certainly won't release any potential, but neither is it likely to cost any potential.  Thus, if we have to pair $2s$ small nodes during a delete-min, we will have to pay for those pairings out-of-pocket; that is, they will contribute a cost of roughly $s$ to the cost of delete-min.  However, the root of the heap can't possibly have more than $\lg n$ small nodes as children (in the general view), since if a node is small, it must have less than $\lg n$ right siblings.  So, our next attempt is to use the classic potential for nodes that are not small, and zero potential for those that are.  Then we get the desired run time for delete-min, but the potential can be too large.  In particular, inserting nodes in sorted order (reverse sorted also works) yields a heap that looks like a path.  Most nodes will use the classic potential, and thus the total potential will be $\Omega(n\log n)$.  However, we are making progress.  If we could get the binary view of the heap into the shape of a balanced binary tree, the potential would indeed be linear.  But instead we ended up with a long path.

Suppose we insisted that both the left and the right subtrees of a node must have size at least $\lg n$, and otherwise the node gets no potential.  This certainly solves the problem with long paths having too much potential, but it goes too far: now long paths have no potential at all, and we may well encounter them when performing a delete-min.  We need something in between.  We end up with three types of nodes.  Small nodes where both the left and right subtrees have sub-logarithmic size have no potential.  Large nodes where both subtrees have at least logarithmic size have the classic potential.  And finally mixed nodes where exactly one of the subtrees has logarithmic size smoothly interpolate between the two extremes.  The potential of a mixed node is simply the classic potential times $s/\lg n$, where $s$ is the size of its smaller subtree.  As we will show soon (see \autoref{linear}), this is enough to guarantee a linear range for the potential function.  Unfortunately, this does not guarantee that delete-min takes amortized logarithmic time.

There is an annoying issue we've been sweeping under the rug ever since we introduced the idea of thresholds where the potential of a node changes sharply.  The classification of many nodes could change simultaneously simply because an insertion into the heap increased $n$, and thus the cutoff $\lg n$, or because a deletion decreased it, and this could have a large effect on the total potential of the heap.  We solve this problem in a similar spirit to how one implements a dynamic array.  A dynamic array has a capacity which always moves by large jumps; for instance one can constrain it to be a power of two.  Changing to a new capacity takes much time, but there is always enough potential saved up to pay for it.  Our analogue of capacity is the sticky size $N$.  Our cutoff will be based on $\lg N$, rather than $\lg n$.  The sticky size starts at 1, and is doubled after an insertion if $n$ is at least twice $N$, and is halved after a deletion if $N$ is at least twice $n$.  This allows us to introduce a size potential for the heap, equal to $9|N-n|$.  We will see that this is more than enough to allow us to pay for the changes in potential caused by changed node classifications due to changed $N$.

Our potential function almost works now, and we are ready to consider a proof strategy for delete-min.  As in the classic analysis, we analyze the first pairing pass and the second pairing pass separately.  For the second pairing pass, we will be content to show that it causes at most a logarithmic increase in potential.  Then we will show that the first pass releases enough potential to pay for all pairings performed in both passes.  (Since the actual work done by both passes is the same, we can certainly pay for both passes if we can pay for at least one, simply by doubling the potential of the heap; that is, instead of assigning a potential of $\lg n$ to each node, assign a potential of $2\lg n$.  We will return to the issue of constant factors shortly.)  Thus we simply show that a pairing between two large nodes (in the first pass) must usually release a unit of potential (with at most a logarithmic number of exceptions), and likewise for a pairing between two mixed nodes, and a pairing between a mixed and a large node, and so on.

We should expect a pairing between two large nodes to give us little trouble, since we should be able to reuse the classic analysis.  We can also expect little trouble if at least one of the nodes involved in the pairing is small, since there are so few small nodes involved in a delete-min operation.  Thus, we should look for trouble in the cases mixed-mixed and mixed-large.  It turns out there is a sub-case of a first-pass mixed-mixed pairing using this potential function where the potential of the heap ends up increasing by $O(1)$, instead of decreasing.  In that case just one of the nodes involved becomes small, which opens up the following simple patch to our potential function: small nodes still have a potential of zero, while mixed and large nodes have a potential of $4+\phi_x$, where $\phi_x$ is the potential of the node $x$ under the slightly broken potential function we are trying to fix.  The constant 4 is somewhat arbitrary, but the idea is that we offset the $O(1)$ potential increase by a $O(1)$ decrease caused by a node going from mixed to small.

Finally, we turn to the mixed-large pairings of the first pass.  It turns out that with the potential function we have so far, mixed-large pairings fail to release any potential.  However, most such pairings performed during the first pass result in the winner of the pairing being large.  (That is, the node with the smaller key value will be parent of the node with the larger key value, and this node with a small key will be a large node at the completion of the pairing.)  This does not seem helpful at first, until we realize that if three consecutive mixed-large pairings are performed, this means we have a chain of six nodes, three of which are large.  At most two of those large nodes were adjacent before the three pairings, and afterwards all three are consecutive siblings.  Thus, if we assign potential to edges (in the binary view) as well as to nodes, we have our final solution.  Most edges will have zero potential, except for those that connect two large siblings, which have a slight negative potential.  (More precisely, an edge has negative potential if it connects a large parent to its large left child in the binary view.)  We then split our analysis into two cases: either there are few mixed-large pairings, in which case they are subsidized by the potential released by the other pairings, or else there are many mixed-large pairings: so many that we can count on many such consecutive triples.  It remains to adjust the constants so that everything works out.  The potential on the negative edges must be $-3$ or less, since a single edge may have to pay for three consecutive mixed-large pairings.  This is doubled to $-6$ when we recall that the first pass has to pay for the second pass.  And finally, since we must also pay for the mixed-large pairings that didn't occur as part of a triple, we need a bit extra.  In the interest of sticking with small integers, we set the edge potential to $-7$.  We now have to also increase the potential of the mixed and large nodes so that if most pairings are not mixed-large, we can pay for the few that are.  In the interest of working with nice round numbers, we achieve this by multiplying the potential by one hundred.

\section{The potential function}
Our potential function is the sum of three components.  The first is the node potential, which will give a value to each node.  (The total node potential is the sum of the values for individual nodes.)  The second is the edge potential: each edge will have a potential of either 0 or $-7$.  (The total edge potential is likewise the sum of the values of individual edges.)  The third we shall call the size potential.  We begin with explaining the concept of the sticky size, since we will need it to define all three components.  The size $n$ of a heap is how many elements it currently stores.  The sticky size $N$ is initially 1.  After every heap update, the sticky size is updated as follows: if $n\>2N$, then $N$ is doubled, and if $n\<N/2$, then $N$ is halved.  The sticky size is the only aspect of the potential function which is not computable simply from the current state of the heap but is based on the history of operations.

The size potential is simply $900|N-n|$.

The node potential is slightly more complicated.  Given a node $x$, let $|x|$ be the size of its subtree (in the binary view).  Its left child in the binary view is $x_L$ and its right child is $x_R$.  The node potential $\phi_x$ of $x$ depends on $|x_L|$ and $|x_R|$.  Note that $|x| = |x_L|+|x_R|+1$.  Let $\lg=\log_2$.  There are three cases:
\begin{description}
\item[Large node.] If $\xl>\lg N$ and $\xr > \lg N$, then $x$ is \emph{large} and $\phi_x = 400+100\lg|x|$.
\item[Mixed node.] If $\xl\<\lg N<\xr$, then $x$ is \emph{mixed} and $\phi_x = 400+100\frac{\xl}{\lg N}\lg|x|$.  The case where $\xr\<\lg N<\xl$ is symmetric.
\item[Small node.] If $\xl \< \lg N$ and $\xr \< \lg N$, then $x$ is \emph{small} and $\phi_x=0$.
\end{description}
If the right child of a large node is also large, then the edge potential of the edge connecting them is $-7$.  All other edges have zero edge potential.

We define the actual cost of an operation to be one plus the number of pairings performed.  Through most of the analysis, it will seem that the node potential is a hundred times larger than what is needed.  Near the end, we will see that we use the excess to pay for mixed-large pairings during a delete-min.

\section{The analysis}
In the proofs below, we assume for convenience that the heap has at least four elements, so we can say things like ``the root of the heap is always mixed, since it has no siblings and $n-1$ descendants in the general representation,''  which assumes that $n-1>\lg N$, which may not be true for heaps with less than four elements.  If we need stronger assumptions on the size, we will call those out explicitly.

\begin{lemma}  The potential of a pairing heap is $O(N) = O(n)$. \label{linear}  \end{lemma}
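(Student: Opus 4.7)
The plan is to show that each of the three additive components of $\Phi$---the size potential, the edge potential, and the node potential---is individually $O(n)$ in absolute value. The size potential $900|N-n|$ satisfies $|N-n|\le N\le 2n$ because the sticky-size rule keeps $N/2\le n\le 2N$, so this component is $O(N)=O(n)$. The edge potential has every summand in $\{0,-7\}$ over the at most $n-1$ edges of the binary tree, so it lies in $[-7(n-1),0]$ and is $O(n)$ in magnitude. Only the node potential $\sum_x\phi_x$ requires substantive work, and since it is nonnegative, only an upper bound is needed.

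I would split the node potential by node type. Small nodes contribute $0$. The constant $400$ appearing in $\phi_x$ for each non-small $x$ contributes at most $400n$ in aggregate. For the variable part of \emph{mixed} nodes, the key structural observation is that the smaller-side subtree of a mixed $x$ (the subtree rooted at whichever of $x_L,x_R$ has binary-view subtree size at most $\lg N$) contains only small nodes, because every node $y$ in it has $|y|\le\lg N$ and hence $\yl,\yr<\lg N$. This forces the smaller-side subtrees of distinct mixed nodes to be pairwise disjoint, giving $\sum_{\text{mixed}} \min(\xl,\xr)\le n$. Combined with $\lg n/\lg N\le 2$ (from $n\le 2N$ and the standing assumption $n\ge 4$), the variable part of the mixed contribution is at most $2n$.

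The remaining piece, $\sum_{\text{large}} \lg|x|$, is the principal obstacle: naively, $\lg|x|\le\lg n$ with up to $n$ large nodes gives only $O(n\lg n)$. My plan is to prove, by induction on $|T|$, that for any binary tree $T$ and threshold $k\ge1$,
\[
\sum_{x\in T,\ \min(\xl,\xr)>k} \lg|x| \;\le\; C\,|T|\,\lg(|T|/k)/k,
\]
taking the right-hand side to be $0$ when $|T|\le k$, where $C$ is a sufficiently large constant. Specializing to $k=\lg N$ and using $\lg(n/\lg N)/\lg N=O(1)$ (a consequence of $\lg n=O(\lg N)$) then yields $O(n)$. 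The inductive step divides on whether the root is large. When it isn't, the bound reduces to the two subtrees' bounds via the superadditivity of $y\mapsto y\lg y$ together with monotonicity of the right-hand side. When it is, both subtrees have size at least $\lg N+1$, and the root contributes an extra $\lg|T|$ that must be absorbed by the slack in the inductive bound; a direct calculation shows that this slack is at least $\lg|T|$ at the extremal split $|T_L|=\lg N+1$ and at the balanced split, and hence at every intermediate split by convexity.

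The main difficulty is precisely this last inductive step: a simpler hypothesis such as $h(T)\le C|T|$ fails because the extra $\lg|T|$ from a large root cannot be absorbed without the $\lg(|T|/k)/k$ factor of slack. Once the hypothesis above is in place, the verification is purely algebraic. For the lower-bound side of the $O(n)$ claim, the node and size potentials are nonnegative and the edge potential is at least $-7(n-1)$, so $\Phi\ge-O(n)$, completing the range bound.
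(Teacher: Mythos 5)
Your handling of the size potential, the edge potential, and the mixed-node contribution is sound and runs parallel to the paper's proof: the light (smaller-side) subtrees of distinct mixed nodes are disjoint, and the factor $\lg n/\lg N$ is bounded by a constant, which is exactly how the paper disposes of the mixed nodes. Where you depart from the paper is the large-node contribution, and that is where the argument breaks.

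The inductive hypothesis
\[
\sum_{x\in T,\ \min(|x_L|,|x_R|)>k} \lg|x| \;\le\; C\,\frac{|T|}{k}\,\lg\!\frac{|T|}{k}
\]
is false for every fixed constant $C$. Take $T$ of size $m=2k+3$ whose root has left and right subtrees each of size $k+1$. Then the root is large and is the only large node, so the left-hand side is $\lg(2k+3)$. The right-hand side is $C\,(2k+3)\lg((2k+3)/k)/k$, which is at most $C(2+3/k)\lg(2+3/k)\le 2C\lg 5$ for $k\ge 3$, a quantity that does not grow with $k$, whereas $\lg(2k+3)$ does. So for each $C$ there is a $k$ (and a tree of size $2k+3$) violating the bound. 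This is not a failure of the inductive step for big trees but a failure of the base of the recursion: the $\lg(|T|/k)$ factor vanishes as $|T|\to k$, and trees of size $\Theta(k)$ can already contain a large node contributing $\Theta(\lg k)$. Your claim that ``the slack is at least $\lg|T|$ at the extremal split and at the balanced split'' is exactly where this is missed: when $|T|$ is near $2k$ those two splits coincide, and the slack there is $O(1)\cdot C$, not $\Omega(\lg|T|)$. (A smaller related slip: $a\lg a + b\lg b$ with $a+b$ fixed is convex and hence \emph{maximized} at the extremal split, so the minimum slack is at the extremes, not interior; convexity cannot be used to bridge from the two endpoints inward.)

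The induction can be rescued by moving the slack into the prelogarithmic factor rather than inside the logarithm: with the hypothesis
\[
\sum_{x\in T,\ \min(|x_L|,|x_R|)>k}\lg|x| \;\le\; \frac{|T|-k-1}{k}\,\lg|T|
\]
(taking the right-hand side to be $0$ when $|T|\le k+1$), the root-large step gains slack $(k+2)\lg|T|/k\ge\lg|T|$, which absorbs the root's contribution; the root-not-large step follows from $\lg a,\lg b\le\lg m$; and specializing to $k=\lg N$ and $|T|=n$ gives $O(n\lg n/\lg N)=O(n)$. This is, however, not what you wrote, and it is a genuinely different shape of bound. The paper itself avoids induction entirely: it observes that the lowest common ancestor of two large nodes is large, so the large nodes form a binary tree after contraction, and each ``leaf-ish'' large node owns a barren subtree of more than $\lg N$ nodes disjoint from every other such subtree. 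This gives at most $2n/\lg N$ large nodes, each with potential $O(\lg n)=O(\lg N)$, hence $O(n)$ total. You should either adopt that counting argument or replace your inductive hypothesis with one, like the one above, that retains linear slack near the base case.
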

\begin{proof}  The size potential is linear by definition.  The edge potential is slightly negative: most edges have potential zero, the exception being those edges that connect two large nodes.  If there are $L$ large nodes, the total edge potential may be as low as $-7(L-1)$.  But the node potential is at least $400L$, so the edge potential can never make the heap potential negative. 

We now turn to the node potential.  The small nodes have potential zero.  Observe that the lowest common ancestor (in the binary representation) of two large nodes must itself be large.  This immediately implies that if the left subtree (in the binary view) of a large node contains any large nodes, then this subtree contains a unique large node which is the common ancestor of all large nodes in this subtree.  (And likewise for the right subtree.)  Thus, it makes sense to speak of the tree induced by the large nodes, which is the binary tree obtained by taking the original pairing heap and performing the following two operations: first, erase all small nodes (they have no mixed or large descendants). After this, all mixed nodes have only one child, so they form paths.  Second, contract these paths to edges.  Now only large nodes remain, and ancestor-descendant relations are preserved.  (One thing that is not preserved is the root: in the original tree, the root is a mixed node.)  If a large node has less than two children in this shrunken tree, call it a \emph{leaf-ish} large node, and otherwise call it an \emph{internal} large node.  In the original tree, a leaf-ish large node $x$ has a left and a right subtree (in the binary representation), at least one of which has no large nodes in it: call that a \emph{barren subtree} of $x$; see \autoref{fig:shrink}.

\begin{figure}\centering\includegraphics{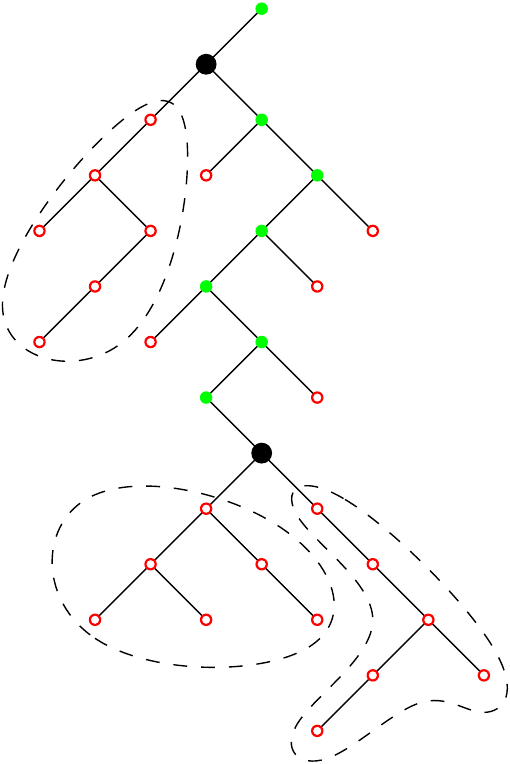}
\caption[How to make a shrunken tree]{The barren subtrees are marked by dashed bubbles; small nodes are marked with hollow red circles, large nodes are marked with bold black disks, and mixed nodes are green.}\label{fig:shrink}
\end{figure}

Let $x$ and $y$ be two leaf-ish large nodes, and observe that their barren subtrees are disjoint, even if $x$ is an ancestor of $y$.  We say that a leaf-ish large node \emph{owns} the nodes in its barren subtree.  Observe that, in the binary view, every leaf-ish large node has at least $\lg N$ descendants which are not owned by any other leaf-ish large node, which implies that there are at most $n/\lg N$ leaf-ish large nodes.  There are more leaf-ish large nodes than there are internal large nodes (since a binary tree always has more leaf-ish nodes than internal nodes, as can be shown by induction), so there are at most $2n/\lg N$ large nodes in total, each of which has a potential of at most $100\lg n+400=100\lg N+O(1)$, so the total potential of all large nodes is at most $200n + o(n)$.

This leaves the mixed nodes.  Every mixed node has a \emph{heavy subtree} with more than $\lg N$ nodes in the binary view, and a possibly empty \emph{light subtree} with at most $\lg N$ nodes.  Since the light subtree contains no mixed nodes, every node in the heap is in the light subtree of at most one mixed node; that is, the light subtrees of different nodes are disjoint.  If $x$ is a mixed node and $L_x$ is the size of its light subtree, the node potential of $x$ is
\begin{align*}\phi_x&\textstyle
= 400+100\frac{L_x}{\lg N}\lg|x|\nl
\<400+100\frac{L_x}{\lg N}\lg n\nl
\<400+100\frac{L_x}{\lg N}\lg2N\nl
= 400+100\frac{L_x}{\lg N}(1+\lg N)\nl
= 400+100\frac{L_x}{\lg N}+100L_x\nl
\<400+100\frac{L_x}{\lg4}+100L_x\nl
= 400+150L_x.
\end{align*}

Summing the potential over all mixed nodes, we have $\sum_x(400 + 150L_x) = 400n + 150 \sum_x L_x$.  Since all the light subtrees are disjoint, $\sum_x L_x$ is at most the heap size: $n$.  Thus, the combined potential of all mixed nodes is $550n$, and that of all nodes is $750n+o(n)$.
\end{proof}

We now analyze the five heap operations.  All operations except delete-min take constant actual time.  The get-min operation does not change the heap so its amortized time is also obviously constant.  The make-heap operation creates a new heap with a potential of 900, due to the size potential.  This leaves insertion, decrease-key, and deletion, which we handle in that order.

\begin{lemma} Insertion into a pairing heap takes $O(1)$ amortized time. \end{lemma}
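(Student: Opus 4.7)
The plan is to bound the change in each of the three potential components -- size, node, and edge -- caused by an insertion, and then combine this with the $O(1)$ actual cost of performing a single pairing to conclude $O(1)$ amortized time.

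First I would analyze the node-potential change. The pairing either makes the new node $A$ the root with the old root $B$ as its binary left child (the ``win'' sub-case), or leaves $B$ as root with $A$ as its new binary left child (the ``lose'' sub-case). The key structural observation is that $A$, being newly created, always has $|A_L|=0$ or $|A_R|=0$, while $B$, being rooted both before and after, always has $|B_R|=0$. A node with at least one side equal to zero cannot be large, and so both $A$ and $B$ are either small (potential $0$) or mixed with the empty side as the ``small side'' (potential $400$, because the factor $0/\lg N$ kills the $\lg|x|$ term). A brief case analysis then shows that the total node-potential change from the pairing is exactly $+400$ in both sub-cases: either $A$ becomes mixed while $B$'s classification is unchanged, or $A$ is small and $B$ transitions from small to mixed. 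Descendants of $B$ in the binary view are structurally untouched, so no other node potential changes.

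Next I would check the edge and size potentials. Because neither $A$ nor $B$ is ever large, the new edge $A$--$B$ is not large-large, and in the lose sub-case the new edge $A$--$c_1$ (which replaces the old, equally-not-large-large $B$--$c_1$ edge, where $c_1$ was $B$'s former binary left child) is also not large-large. So the edge potential contributes nothing. When $N$ is not updated the size potential changes by at most $\pm 900$, since $|N-n|$ shifts by one. Combining, the amortized cost in the non-doubling case is at most $400+900+O(1) = O(1)$.

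The main obstacle is the case in which the insertion pushes $n$ to $2N$, triggering $N$ to double. The threshold $\lg N$ then rises by one and may reclassify many nodes across the heap. Here the plan is to exploit the fact that at the moment of doubling $|N-n|$ collapses from $N$ to $0$, releasing $\Theta(n)$ of size potential. I would then verify that this release absorbs all adverse changes. Raising $\lg N$ can only demote a node's category (large to mixed, large to small, mixed to mixed with a smaller scaling factor, or mixed to small), and in every such transition the new node potential is at most the old, so the total node potential only decreases. A $-7$ edge may become $0$ if one of its large endpoints ceases to be large, an increase of $7$ per affected edge; but by \autoref{linear} there are at most $2n/\lg N$ large nodes and hence at most that many $-7$ edges, so the total edge-potential increase is $O(n/\lg N)$. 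This is dwarfed by the $\Theta(n)$ release from the size potential, so the amortized cost is $O(1)$ in the doubling case as well.
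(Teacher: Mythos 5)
Your proposal is correct and follows essentially the same route as the paper: bound the pairing's effect on node potential by observing that neither the new node nor the old root can be large (both have an empty side), bound the edge potential by the same observation, handle the $N$-stable case with the $\pm 900$ size-potential shift, and handle the $N$-doubling case by noting that the $\Theta(n)$ size-potential release dominates the $O(n/\log n)$ possible edge-potential increase while node potentials only decrease. You are in fact a bit more careful than the paper about the one pre-existing binary-tree edge that is rerouted in the ``lose'' sub-case (the paper claims no existing edges are disturbed, which is not literally true but harmless since that edge is a left-child edge and hence carries zero potential regardless); the paper, in turn, is more careful about verifying the numerical inequality for small $n$ in the doubling case, which you leave at the asymptotic level.
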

\begin{proof} The actual time is constant, so it suffices to bound the change in potential.  We first bound the potential assuming $N$ stays constant during the execution of the operation.  Note that we need not worry about edge potentials here, since inserting a new node can not disturb any existing edges, and creates only one new edge, and the new node is never large, since if it becomes the root it will have no siblings in the general view, and if it does not become the root then it will have no children in the general view.  (In the binary view this corresponds to having no right child or no left child.)

There are only two nodes whose node potentials change, the new node and the old root.  If the old root has a larger key value than the new node, then the new node becomes the root.  They both become mixed nodes with a potential of 400.  If the new node is bigger than the old root, then the old root still has a potential of 400, and the new node becomes a mixed node, because it has no children in the general view, and thus also has a potential of 400.  Thus, if $N$ does not change, the amortized cost is constant.

However, $N$ could increase to the next power of two.  If it does, some mixed nodes may become small and some large nodes may become mixed or even small.  These are decreases, so we can ignore them.  Also, all mixed nodes that remain mixed will have their potential decrease.  What we have to worry about is the edge potential.  However, since there are only $O(n/\log n)$ large nodes, the total edge potential can only increase by an amount that is $o(n)$.  Meanwhile, if $N$ increases, its new value is the new heap size $n$, while the old value was $n/2$, so we release $450n$ units of size potential.  The increase in potential if $N$ doubles is thus $O(n/\log n) - \Theta(n)$, which is negative for large enough $n$.  How large must $n$ be for this to hold?  There were previously at most $2n\/\lg(n/2)$ large nodes, and thus at most that many edges with negative edge potential.  Thus, we need ${7\cdot2n\/\lg(n/2)} < 450n$.  Dividing both sides by $n$, we obtain ${14\/\lg(n/2)} < 450$, or ${7\/\lg(n/2)} < 225$.  Reshuffling, we get
$7 < 225\lg(n/2) = 225(\lg n - 1) = 225\lg n - 225$, or equivalently, $232 < 225\lg n$.  Thus, the asymptotic statement actually holds for all $n>3$.  Since a heap that small contains no large nodes at all, the statement holds unconditionally.
\end{proof}

The following observation will be useful when we analyze decrease-key.
\begin{lemma}  A node's potential is monotone nondecreasing in the size of both of its subtrees (in the binary view) if $n$ is fixed.
\end{lemma}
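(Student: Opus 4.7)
The plan is to reduce the claim to routine piecewise monotonicity checks. Since the sticky size $N$ is updated deterministically from $n$ and the history (doubling or halving only in response to $n$ crossing a threshold), fixing $n$ fixes $N$. So throughout I may treat both $n$ and $N$ as constants while letting $a = |x_L|$ and $b = |x_R|$ vary; then $|x| = a+b+1$. The potential $\phi_x$ is a piecewise function of $(a,b)$ whose pieces correspond to the three regimes small, mixed, and large.

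First I would verify monotonicity inside each regime separately. The small regime gives $\phi_x = 0$, which is trivially nondecreasing. The large regime gives $\phi_x = 400 + 100\lg(a+b+1)$, plainly nondecreasing in each of $a$ and $b$. For a mixed node with $a \le \lg N < b$, the potential is $400 + 100 \frac{a}{\lg N} \lg(a+b+1)$; the factor $a/\lg N$ is nondecreasing in $a$, and $\lg(a+b+1)$ is nondecreasing in both $a$ and $b$, so the product plus $400$ is nondecreasing in each argument. The symmetric mixed case ($b \le \lg N < a$) is handled by swapping $a$ and $b$.

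The only substantive step is checking the two types of regime boundary to rule out a downward jump. At a small-to-mixed boundary (one of $a,b$ just crosses $\lg N$) the potential rises from $0$ to at least $400$, so monotonicity is preserved. At a mixed-to-large boundary the smaller side, say $a$, crosses $\lg N$; plugging $a = \lg N$ into the mixed formula gives $400 + 100\cdot 1 \cdot \lg(a+b+1) = 400 + 100\lg|x|$, which is exactly the large formula. So the mixed formula was calibrated to agree with the large formula at the boundary and the transition is continuous.

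I do not expect a real obstacle here; the mild nuisance is simply keeping the two symmetric mixed cases straight and remembering to argue separately about each of $a$ and $b$ in the mixed formula (since the coefficient depends on only one of them). Combining the within-regime monotonicity with the boundary checks gives monotone nondecreasing behavior of $\phi_x$ in each of $|x_L|$ and $|x_R|$, as claimed.
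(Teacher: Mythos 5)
Your proof is correct and follows essentially the same approach as the paper. The paper streamlines the mixed and large cases into a single formula, $400+100\min\bigl(1,\frac{\min(|x_L|,|x_R|)}{\lg N}\bigr)\lg(|x_L|+|x_R|+1)$, which is monotone by inspection and makes your mixed-to-large boundary check automatic, but this difference is cosmetic. One small correction to your opening remark: fixing $n$ does not fix $N$, since the sticky size depends on the operation history (for instance $n=10$ can coexist with either $N=8$ or $N=16$); what the lemma really intends, and what your argument actually uses, is that at any fixed instant both $n$ and $N$ are constants.
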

\begin{proof} We must show that increasing the size of the left or right sub-tree of a node never causes its potential to drop.  This follows immediately from several simple observations.  As long as a node is small, its potential is identically zero and thus monotone.  Since the potential is always non-negative, transitioning from small to non-small can only increase the potential.  Observe that the formulas for mixed nodes and large nodes can be combined, as follows:
\newcommand\both{(\xl,\xr)}
$$400+100\min\left(1,\frac{\min\both}{\lg N}\right)\times\lg(\xl+\xr+1).$$
This formula is monotone in $\xl$ and $\xr$, by inspection.
\end{proof}

\begin{lemma} Decrease-key in a pairing heap takes $O(\log n)$ amortized time.\end{lemma}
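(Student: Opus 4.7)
The plan is to bound the potential change by $O(\log n)$, since the actual cost of decrease-key is $O(1)$. Because $n$ does not change during decrease-key, neither does the sticky size $N$, and hence the size potential contributes nothing. This reduces the task to bounding the change in node plus edge potentials by $O(\log n)$.

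Structurally, a decrease-key on a non-root $v$ cuts $v$'s binary-view subtree out from under its binary parent $b$ (promoting $v_R^{\text{old}}$ in $v$'s place) and then pairs the detached subtree with the old root $\rho$. The only subtree sizes that change are $|v|$, $|\rho|$, and those of the binary ancestors of $v$ from $b$ up through $\rho$; each such ancestor loses exactly $1+|v_L^{\text{old}}|$ descendants, entirely from its path-side child. Near the cut point and the pairing, only a constant number of edges appear or disappear, so structural changes contribute $O(1)$ to the edge potential.

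For the node potential, I would check directly that both $v$ and $\rho$ end up with potential at most $400+100\lg n = O(\log n)$, regardless of which wins the pairing. For the ancestors above $b$, only one child shrinks while the other stays the same size, so the preceding monotonicity lemma forces each ancestor's node potential to be non-increasing. Summing, the change in node potential is $O(\log n)$.

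The delicate step is the edge potential above $b$, where the tree structure is preserved but types can still change. Monotonicity forces type changes to go only in the direction large$\to$mixed$\to$small, and a negative edge requires two large endpoints, so an edge above $b$ can only lose its $-7$ charge when an endpoint transitions from large to mixed; each such transition affects at most two negative edges, adding at most $14$ to the edge potential. A counting argument shows at most $1+|v_L^{\text{old}}|$ ancestors transition, since each transition forces the path-side child's old size to lie in the width-$(1+|v_L^{\text{old}}|)$ window $(\lg N,\lg N+1+|v_L^{\text{old}}|]$, and path-side child sizes strictly increase up the path. The main obstacle is that $14(1+|v_L^{\text{old}}|)$ need not itself be $O(\log n)$. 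My plan is a case split on $|v_L^{\text{old}}|$: when it is $O(\log n/\log\log n)$, the bound $14(1+|v_L^{\text{old}}|)$ is already $O(\log n)$; when it is larger, summing the node-potential drops across the transitioning window (each transitioning $v_i$ satisfies $|v_i^{\text{new}}|>\lg N$, which lets one lower-bound its drop by a multiple of $1-|v_{i-1}^{\text{new}}|/\lg N$) produces a quadratic-in-$|v_L^{\text{old}}|$ reserve that comfortably absorbs the linear edge-potential increase.
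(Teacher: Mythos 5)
Your overall structure matches the paper's: the size potential is unchanged, the ancestors' node potentials can only decrease by monotonicity, the two paired nodes each change by at most $O(\log n)$, and the remaining danger is edge potential lost when large ancestors become mixed. The gap is in the final count. You correctly derive that a transitioning ancestor must have had its $v$-side binary subtree size in the window $(\lg N,\ \lg N + 1 + |v_L^{\text{old}}|]$ and that these sizes strictly increase up the path, but then you only read off the window's \emph{width}, getting $1+|v_L^{\text{old}}|$, and declare this an obstacle. It isn't, because you haven't used the window's \emph{lower end}: the $v$-side subtree of the $i$th binary ancestor of $v$ already contains $v$, all of $v_L^{\text{old}}$ and $v_R$, and the $i-1$ lower ancestors, so its pre-cut size is at least $i + |v_L^{\text{old}}| + |v_R|$, and hence its post-cut size is at least $i-1$. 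Once $i > \lg N + 1$ the post-cut $v$-side size already exceeds $\lg N$, so the ancestor cannot become mixed. Therefore at most $\lg N + O(1) = O(\log n)$ ancestors transition --- no case split needed. This is precisely the paper's argument, compressed into its remark that ``$x$'s great$^{\lg n}$-grandparent still has more than $\lg n$ descendants in both subtrees despite losing $x$.''

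Your proposed fallback for the case $|v_L^{\text{old}}| = \omega(\log n/\log\log n)$ --- summing node-potential drops over the transitioning ancestors to build a quadratic reserve --- should be dropped: it is unnecessary once the tighter count is in hand, and as sketched it does not obviously work, since a large ancestor whose $v$-side subtree drops only barely below $\lg N$ becomes a mixed node with potential still nearly $400 + 100\lg|a|$, so individual drops can be arbitrarily small and there is no clean per-ancestor lower bound of the form you describe.
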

\begin{proof} The actual time is constant, so it suffices to bound the potential change.  There are three types of nodes that can change potential: the old root, the decreased node, and the decreased node's ancestors in the binary representation.  The ancestors' potential can only go down, since their subtree size is now smaller and potential is a monotone function of subtree size.  This leaves just two nodes that might change potential: the root and the node that had its key decreased.  But the potential of a node is between zero and $400+100\lg n$ at all times.  This leaves the edge potential and the size potential.  The size potential does not change.  Only $O(1)$ edges were created and destroyed, so the edge potential change due to directly affected edges is negligible.  However, it is possible that there is a large indirect effect: some large ancestors of the decreased node might transition from large to mixed, and if those nodes had an incident edge with negative potential, its potential is now zero.  Fortunately for us, at most $\lg n+O(1)$ edges can undergo this transition.  To see this, let $x$ be the decreased node.  The parent of $x$ may transition from large to mixed as a result of losing $x$, but it can't transition from large to small, because losing $x$ can only affect the size of one of its subtrees, not both.  Likewise, $x$'s grandparent may transition from large to mixed, as well as    $x$'s great-grandparent, and $x$'s great-great-grandparent = great$^2$-grandparent, and so on.  However, $x$'s great$^{\lg n}$-grandparent still has more than $\lg n$ descendants in both subtrees despite losing $x$, and so will not undergo this transition.  Thus, the change in edge potential is $O(\log n)$.
\end{proof}

\begin{lemma} Delete-min in a pairing heap takes $O(\log n)$ amortized time.\end{lemma}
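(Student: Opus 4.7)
The plan is to decompose the analysis by pairing pass and by node type, following the preview in Section 2. Let $c$ denote the actual cost (one plus the number of pairings); the goal is to show that the amortized cost is $O(\log n)$.

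I start by disposing of bookkeeping. Since $n$ decreases by one, the sticky size $N$ either stays fixed or halves. If $N$ is unchanged, the cutoff $\lg N$ does not move and node classifications can change only locally around the old root, so the size potential moves by $O(1)$. If $N$ halves, then just before the operation $n$ was roughly $N/2$, so $|N-n|$ drops by roughly $n/2$; the released $\Theta(n)$ of size potential more than swamps the $O(n/\log n)$ worth of additional edge-potential risk from global reclassification, by a bookkeeping argument parallel to the one in the insertion lemma. Pairings that involve a small node are also easy to handle: a small node has fewer than $\lg N$ right siblings in the general view, so the old root has at most $O(\log N)$ small children and hence at most $O(\log n)$ first-pass and second-pass pairings touch a small node. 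These are absorbed into the desired $O(\log n)$ bound.

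With the bookkeeping out of the way, I split the cost between the two passes and follow the scheme laid out in Section 2: the second pass raises the potential by at most $O(\log n)$, while the first pass releases enough potential to pay the actual work of \emph{both} passes. The factor of $2$ needed to cover the second pass is already absorbed into the coefficient $100$ on the node potentials. Within the first pass, I case-split on the types of the two nodes being paired. Large--large pairings release potential by an argument that reduces to the classic pairing-heap analysis (scaled by $100$). Mixed--mixed pairings release potential either by a similar reduction or, in the delicate subcase where a mixed node becomes small after the pairing, through the loss of the ``$+400$'' constant---which is precisely why that constant sits in the node-potential formula.

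The main obstacle is the mixed--large case, which releases no node potential on its own. Following the preview, I handle it by a dichotomy on the fraction of first-pass pairings that are mixed--large. If that fraction is below a fixed threshold, the surplus released by the other pairings (thanks to the extra factor of $100$) subsidizes the mixed--large ones. Otherwise, mixed--large pairings are so common that by a pigeonhole argument along the chain of the root's children, many disjoint consecutive triples of mixed--large pairings must occur; each such triple produces three consecutive large siblings after the first pass, forcing at least one new large--large edge where none existed before. The $-7$ edge potential on each such edge more than pays for three pairings even after the $\times 2$ for the second pass. The step I expect to require the most care is the local bookkeeping for a single pairing---tracking precisely how node types and large--large edges are created, destroyed, and reclassified---and verifying the ``fresh large--large edge per triple'' claim once one accounts for the edges that were already present among the old root's children before the pass, as well as the minor boundary effects at the ends of the chain.
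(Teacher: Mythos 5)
Your overall skeleton matches the paper's: split by pass, split by type-pair, telescope for large--large, use the $+400$ constant to pay for a mixed node demoting to small, and the dichotomy on the fraction of mixed--large pairings with edge potentials rescuing the high-fraction case. That part of the proposal is on target.

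However, there is a genuine error in the sticky-size bookkeeping, and it is not a small one. You write that when $N$ halves, the risk to bound is ``the $O(n/\log n)$ worth of additional edge-potential risk from global reclassification, by a bookkeeping argument parallel to the one in the insertion lemma.'' This has the direction reversed. During an insertion, $N$ doubles and $\lg N$ rises, so nodes can only \emph{drop} in rank (large $\to$ mixed $\to$ small); node potentials therefore only decrease, and the hazard is the edge potential (a $-7$ edge can vanish when one endpoint ceases to be large), which is indeed bounded by $O(n/\log n)$. During a delete-min, $N$ halves and $\lg N$ falls, so nodes can only \emph{rise} in rank (small $\to$ mixed $\to$ large). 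Here the edge potential can only \emph{help} (more large nodes means more opportunities for $-7$ edges), and the actual hazard is the \emph{node} potential, which can jump from $0$ up to $\Theta(n)$ in one shot. That is not a lower-order term: by the computation in the linearity lemma, the total node potential of the mixed and large nodes can be as high as $750n + o(n)$, against a size-potential release of only $900n$. The margin is $150n$ minus lower-order corrections, and the paper has to verify this inequality explicitly (and handle small $n$ separately). Your proposal's ``$\Theta(n)$ swamps $O(n/\log n)$'' would not survive contact with the actual constants.

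A secondary, smaller gap: in the mixed--large case you account for the \emph{work} of the pairings via the $-7$ edges, but mixed--large pairings also have a node-potential change that must be bounded. The paper shows this change telescopes to $O(\log n)$ in the same way the large--large case does; the edge-potential argument alone does not cover it. You do flag ``local bookkeeping for a single pairing'' as the step needing most care, so you may be aware the accounting is incomplete, but as written this piece is missing.
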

\begin{proof}  We break the analysis into two parts.  Delete-min changes $n$, which means it may change  $N$, which may affect the size potential, and the node potential, and the edge potential.  The first part of our analysis bounds the change in potential due to changing $N$, and the second part deals with the delete-min and associated pairings.

Changing $N$ may change small nodes into mixed or even large, and likewise it may change mixed nodes into large.  In the case of the edge potentials, this works in our favor, since the edge potentials can only go down when then the number of large nodes increase.

If the new value of $N$ is $n$ and the old value was $2n$, then we release $900n$ units of size potential, while in \autoref{linear} we showed that the sum of the mixed and large potentials is at most $750n+o(n)$.  In fact, we can redo the calculation of that lemma slightly more precisely now, by taking advantage of the fact that we now have $N=n$.  There are at most $\frac{2n}{\lg n}$ large nodes, each with a potential of at most $400+100\lg n$, so the total potential of all large nodes is $200n+\frac{800n}{\lg n}$.  We then add the mixed nodes for a total of $750n+\frac{800n}{\lg n}$.  Thus, we release enough size potential if
$900-750=150>\frac{800}{\lg n}$.  If the heap contains at least 64 items, this inequality holds, since $150\cdot6=900>800$.  Could it be that for small heaps, we do not release enough size potential to pay for changing $N$?  If the heap has size 8 or less, then there are no large nodes, and we are then also guaranteed to release enough size potential.  Since the size of the heap must be a power of two when $N$ changes, this leaves the heap sizes of 16 and 32 in question.  A heap of size 16 has at most one large node, so the relevant inequality is
$900\cdot16 > 550\cdot16 + 400+100\lg16$.  Reshuffling: $350\cdot16>400+100\cdot4=800$, which clearly holds.  Finally, a heap of size 32 has at most three large nodes, so we need
$350\cdot32 > 3(400+100\lg32)$.  Dividing both sides by 100, we get
$3.5\cdot32 = 7\cdot16 = 7\cdot8\cdot2 = 56\cdot2 = 112 > 3(4+\lg32)=12+3\cdot5=12+15=27$, which also clearly holds.  It remains to analyze the cost of the pairings performed.

If the root has $c>0$ children, then a delete-min performs $c-1$ pairings and thus takes $c$ units of actual time.  (If the root has $c=0$ children, we are doing delete-min on a heap of size 1, which is trivial.)  The loss of the root causes a potential drop of 400.  Notice first that when two nodes are paired, this does not affect the subtree sizes of any other nodes.  There are several cases to consider, depending on the sizes of nodes that get paired, and also depending on whether it is the first or second pairing pass.  To avoid confusion: whenever we use the notation $|x|$ to refer to the size of a node $x$, if $x$ changed size as a result of the pairing we are analyzing, we mean its initial size.  Finally, when we pair two nodes, the node that becomes the parent of the other is said to have won the pairing, while the other is said to have lost~it.

Let $k$ be the number of pairings performed in the first pass.  The number of pairings performed in the second pass is either $k$ or $k-1$.  We will show that the second pass increases the potential by $O(\log n)$ and that the first pass increases the potential by $O(\log n)-2k$, and thus the amortized cost of delete-min is $O(\log n)$.

We first establish some vocabulary we will use throughout the analysis.  Every pairing performed during the delete-min will be between two adjacent siblings (in the general view) $x$ and $y$, where $x$ is left of $y$; see \autoref{fig:pair}.  (In the binary view, $y$ is the right child of $x$.)  We use $x_L$ to denote $x$'s left subtree (in the binary view), $y_L$ for $y$'s left subtree, and $y_R$ for $y$'s right subtree (which is the subtree containing the siblings right of $y$ in the general view).

\begin{figure} \centering \includegraphics{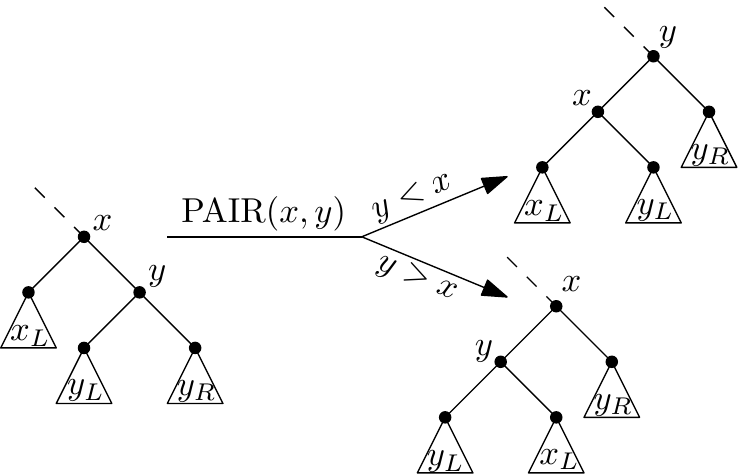}
\caption{A step of the first pairing pass in the binary view.}\label{fig:pair}
\end{figure}

Finally, let $\k_ML$ denote the number of pairings performed in the first pass of a delete-min that involve one mixed node and one large node, and let $\p_ML$ denote the increase in potential as a result of those pairings.  We also define $\k_LL$, $\k_MM$ and so on, analogously.

\paragraph{Large-large.} We start with a large-large pairing.  In this case, the potential function is the same as the classic potential of~\cite{pair}, and the analysis is nearly identical.

\subparagraph{First pass.}  We show that $\p_LL \< -393\k_LL + O(\log n)$.  We will use the fact that
$ab\<\frac14(a+b)^2$.  (Proof:
$ab \< \frac14(a+b)^2  \iff  4ab \< (a+b)^2 = a^2 + b^2 + 2ab  \iff  0 \< a^2+b^2-2ab = (a-b)^2$, and the square of a real number is never negative.)

We know that $\yr>0$, for otherwise $y$ could not be large (see \autoref{fig:pair}).  The initial potential of $x$ is $400+100\lg|x| = 400+100\lg(\xl+2+\yl+\yr)$, and the initial potential of $y$ is $400+100\lg|y| = 400+100\lg(\yl+1+\yr)$.  The potential of $x$ and $y$ after the pairing depends on which of them won the pairing, but the sum of their potentials is the same in either case: $800+100\lg(\xl+\yl+1) + 100\lg(\xl+\yl+2+\yr)$.  The change $P$ in node potential is
\begin{align*}
P&= 100\lg(\xl+\yl+1) - 100\lg(\yl+\yr+1)
\nl< 100\lg(\xl+\yl+1) - 100\lg\yr
\nl= 100[\lg(\xl+\yl+1) +\lg\yr] + 100[-\lg\yr - \lg\yr]
\nl= 100\lg[(\xl+\yl+1)\yr] - 200\lg\yr
\nl\<100\lg\frac14(\xl+\yl+1+\yr)^2 - 200\lg\yr
\nl< 100\lg\frac14|x|^2 - 200\lg\yr
\nl= 200\lg\frac14|x| - 200\lg\yr
\nl= 200(\lg|x|-\lg4) - 200\lg\yr
\nl= 200\lg|x|-200\lg4 - 200\lg\yr
\nl= 200\lg|x| - 400 - 200\lg\yr.
\end{align*}

We now sum the node potential change over all large-large pairings done in the first pass.  Denote the nodes linked by large-large pairings during this pass as $x_1, \dots, x_{2k}$, with $x_i$ being left of $x_{i+1}$.  As a notational convenience, let $L_i=200\lg|x_i|$.  Also, let $x'_i$ denote the right subtree of $x_i$.  Note that for odd $i$, we have $x'_i = x_{i+1}$, but for even $i$, we don't, since there is no guarantee that all large nodes are adjacent to each other.  If we also define $L'_i=200\lg|x'_i|$, then by the calculation above, the $i$th pairing raises the potential by at most $L_{2i-1}-400-L'_{2i}$.  If all large pairings done in the first pass were adjacent, we'd have $L'_{2i}=L_{2i+1}$.  Since they need not be, we have $L'_{2i}\>L_{2i+1}$.  Thus, we have a telescoping sum:
\begin{align*}P&\textstyle
\<\sum_{i=1}^k(L_{2i-1}-400-L'_{2i})\nl
\<\sum_{i=1}^k(L_{2i-1}-400-L_{2i+1})\nl
= -400k+\sum_{i=1}^k(L_{2i-1}-L_{2i+1})\nl
= -400k+\sum_{i=1}^k L_{2i-1}-\sum_{i=1}^k L_{2i+1}\nl
= -400k+(L_1+\sum_{i=2}^k L_{2i-1})-(L_{2k+1}+\sum_{i=1}^{k-1} L_{2i+1})\nl
= -400k+L_1-L_{2k+1}+\sum_{i=2}^k L_{2i-1}-\sum_{i=1}^{k-1} L_{2i+1}\nl
= -400k+L_1-L_{2k+1}+\sum_{i=1}^{k-1} L_{2i+1}-\sum_{i=1}^{k-1} L_{2i+1}\nl
= -400k+L_1-L_{2k+1}\nl
\<-400k+L_1\nl
\<-400k+200\lg n.
\end{align*}

We should also consider the effects of edge potential.  In the course of a pairing, five edges are destroyed and five new edges are created.  Out of these five, three connect a node to its right child (in the binary view) and thus may have non-zero potential.  In the case of a large-large pairing, if the edge that originally connected  $x$ to its parent $p$ had negative potential, then the edge that connects the winner of the pairing to $p$ also has negative potential.  Likewise, if the edge that connects $y$ to $y_R$ had negative potential, then the edge that connects the winner to $y_R$ does too.  However, the edge that connected $x$ to $y$ had negative potential, while the edge that connects the loser to its new right child might not.

Thus, each pairing releases at least 400 units of node potential and costs at most 7 units of edge potential, and therefore $\p_LL \< -393\k_LL + O(\log n)$.

\subparagraph{Second pass.} The second pairing pass repeatedly pairs the two rightmost nodes.  Therefore, one of them has no right siblings in the general representation, which means in the binary representation its right subtree has size zero, which implies that it is not a large node.  Hence, in the second pairing pass there are no large-large pairings.

\paragraph{Mixed-mixed.}  Since $x$ and $y$ are initially mixed, any incident edge has potential zero, so any edge potential change would be a decrease, and thus we can afford to neglect the edge potentials.

\subparagraph{First pass.} We will show that $\p_MM \< -150\k_MM + O(\log n)$.  There are two cases: (1) $x$ and $y$ both have small left subtrees in the binary view, or (2) either $x$ or $y$ has a large left subtree in the binary view.  We first handle the second case: at least one of the two nodes being paired has a large left subtree and hence a small right subtree.  The left-heavy node can not be $x$, because $y$ is contained in the right subtree of $x$, and $y$ can not be mixed if the right subtree of $x$ is small.  Thus we conclude that $y$'s right subtree is small.  This can happen, but it can only happen once during the first pass of a delete-min, because in that case all right siblings of $y$ in the general view will be small.  An arbitrary pairing costs only $O(\log n)$ potential, so this case can not contribute more than $O(\log n)$ to the cost of the first pass.

We now turn to case (1), where $x$ and $y$ both have small left subtrees.  The initial potential of $x$ is $400+100\frac{|x_L|}{\lg N}\lg|x|$, and the initial potential of $y$ is $400+100\frac{|y_L|}{\lg N}\lg|y|$.  Observe that whichever node loses the pairing will have left and right subtrees with sizes $\xl$ and $\yl$.  Thus, both its subtrees will be small, and so the loser becomes a small node with a potential of zero.  There are two sub-cases to consider: (a) the winning node remains mixed, or (b) it becomes large.  We first consider case (a).  Since the winner remains mixed, its new potential is $400+100\frac{\xl+\yl+1}{\lg N}\lg|x|$.  We will make use of the fact that $\yl+1<\lg N$, since the winner is mixed.  The increase $P$ in potential is:
\begin{align*}P&\textstyle
= 400+100\frac{\xl+\yl+1}{\lg N}\lg|x|-(400+100\frac\yl{\lg N}\lg|y|)-(400+100\frac\xl{\lg N}\lg|x|)\nl
= -400 + 100\frac{\xl+\yl+1}{\lg N}\lg|x| - 100\frac\yl{\lg N}\lg|y| - 100\frac\xl{\lg N}\lg|x|\nl
= -400 + 100\frac{\yl+1}{\lg N}\lg|x| - 100\frac\yl{\lg N}\lg|y|\nl
= -400 + 100\frac{\yl+1}{\lg N}\lg(|x_L|+1+|y|) - 100\frac\yl{\lg N}\lg|y|\nl
\<-400 + 100\frac{\yl+1}{\lg N}\lg(\lg N+1+|y|) - 100\frac\yl{\lg N}\lg|y|\nl
\<-400 + 100\frac{\yl+1}{\lg N}\lg(|y_R|+|y|) - 100\frac\yl{\lg N}\lg|y|\nl
< -400 + 100\frac{\yl+1}{\lg N}\lg(|y|+|y|) - 100\frac\yl{\lg N}\lg|y|\nl
= -400 + 100\frac{\yl+1}{\lg N}\lg2|y| - 100\frac\yl{\lg N}\lg|y|\nl
= -400 + 100\frac{\yl+1}{\lg N}(1+\lg|y|) - 100\frac\yl{\lg N}\lg|y|\nl
= -400 + 100\frac{\yl+1}{\lg N}+100\frac{\yl+1}{\lg N}\lg|y| - 100\frac\yl{\lg N}\lg|y|\nl
= -400 + 100\frac{\yl+1}{\lg N}+100\frac1{\lg N}\lg|y|\nl
< -400 + 100 + 100\frac1{\lg N}\lg n\nl
= -300 + 100\frac1{\lg N}\lg n\nl
< -300 + 100\frac1{\lg N}\lg2N\nl
= -300 + 100\frac1{\lg N}(\lg N+1)\nl
= -300 + 100(1+\frac1{\lg N})\nl
= -200 + \frac{100}{\lg N}\nl
\<-200 + \frac{100}{\lg4}\nl
= -200 + \frac{100}{2}\nl
= -200 + 50\nl
= -150.
\end{align*}
Thus, if the node that wins the pairing remains mixed, then at least 150 units of potential are released.  We now turn to case (b) where the node that wins the pairing becomes large.  This can only happen if
$\xl+\yl+1>\lg N$.  In that case, the increase in potential is
\begin{align*}P&\textstyle
= 400 + 100\lg|x| - 100(4+\frac\yl{\lg N}\lg|y|) - 100(4+\frac\xl{\lg N}\lg|x|)\nl
= -400 + 100\lg|x| - 100\frac\yl{\lg N}\lg|y| - 100\frac\xl{\lg N}\lg|x|\nl
< -400 + 100\lg|x| - 100\frac\yl{\lg N}\lg|y| - 100\frac\xl{\lg N}\lg|y|\nl
= -400 + 100\lg|x| - 100\frac{\xl+\yl}{\lg N}\lg|y|\nl
\<-400 + 100\lg|x| - 100\lg|y|\nl
= -400 + 100\lg(\xl+1+|y|) - 100\lg|y|\nl
\<-400 + 100\lg2|y| - 100\lg|y|\nl
= -400 + 100(\lg|y|+1) - 100\lg|y|\nl
= -400 + 100\lg|y|+100 - 100\lg|y|\nl
= -300 + 100\lg|y| - 100\lg|y|\nl
= -300.
\end{align*}
and thus at least 300 units of potential are released.

\subparagraph{Second pass.}  In the second pass, we are guaranteed that $\yr=0$.  The initial potential of $y$ is thus 400.  The initial potential of $x$ depends on which of its subtrees is small.  But in fact we know that $x$ is right-heavy, or else $y$ could not be mixed.  Thus the initial potential of $x$ is
$400+100\frac\xl{\lg N}\lg|x|=400+100\frac\xl{\lg N}\lg(\xl+2+\yl)$.  Whichever node wins the pairing will have a final potential of 400 (because it will have no right siblings).  Whichever node loses the pairing will have a final potential of
$400+100\frac\xl{\lg N}\lg(\xl+1+\yl)$.  Thus, there is no potential gain.

\paragraph{Mixed-small and large-small.} These types of pairings can only happen once during the first pass. To see this, observe that, in the general view, all right siblings of a small node are small.  Therefore we have $\k_MS+\k_LS \< 2$.  An arbitrary pairing costs only $O(\log n)$ potential, so $\p_MS+\p_LS$ is $O(\log n)$.  (The winner of such a pairing is no longer small, so this type of pairing can happen only once during the second pass as well.)

\paragraph{Small-small.}  We show that the number $\k_SS$ of small-small pairings performed in both passes is $O(\log n)$, and that the potential increase $\p_SS$ caused by said pairings is also $O(\log n)$.  In one pass, there are fewer than $\lg N$ small-pairings, because a small node has few right siblings.  By the same logic as for mixed-mixed, the loser of a small-small pairing remains small.  The winner may remain small, in which case the pairing costs no potential.  Or the winner may become mixed.  However, that can only happen once per pass.  For if the winner is mixed, that means that the combined subtree sizes of the two nodes exceeded $\lg N$, which means none of the left siblings were small.  Thus, in the first pass, only the first small-small pairing may have the winner become mixed.  Thus, $\p_SS$ is $O(\log n)$.

\paragraph{Mixed-large.}  We show that the heap potential can increase by at most $O(\log  n)$ as a result of all mixed-large pairings performed.  We begin with the second pass this time, to get the easy case out of the way first.  The mixed-large pairings of the second pass cause no potential increase at all.

\subparagraph{Second pass.}  In the second pass, we are guaranteed that $\yr=0$.  The initial potential of  $y$ is thus 400.  The initial potential of $x$ is $400+100\lg|x|=400+100\lg(\xl+2+\yl) $.  Whichever node wins the pairing will have a final potential of 400 (because it will have no right siblings).  Whichever node loses the pairing will have a final potential of $400+100\lg(\xl+1+\yl)$.  Thus, there is no increase in node potential.  At first it seems that the edge potential could increase, for even though $\yl\>\lg N$ (and likewise for $x_L$), there is no guarantee that $y_L$ (or $x_L$) is a large node.  On the other hand, there is also no guarantee that the parent of $x$ in the binary view is not a large node.  Thus, it is possible that we lose an edge with negative potential and have nothing to replace it with.  However, this can only happen once, because the loser of such a pairing becomes a large node and will play the role of $y_L$ in the next pairing, and once any pairing has a large $y_L$, all subsequent ones will too.

\subparagraph{First pass.}  We have three cases to consider, depending on which of $\xl$, $\yl$, or $\yr$ is small.  The case where $\yr\<\lg N$ is easy to dispense with, as it can only happen once during a delete-min.  Observe that in the other two cases, the edge potential can't increase, because the winner of the pairing is large.  If $\yl\<\lg N$, the initial potential of $x$ is $400+100\lg|x|$, and the initial potential of $y$ is $400+100\frac\yl{\lg N}\lg|y|$.  After the pairing, the loser is a mixed node with potential $400 + 100\frac\yl{\lg N}\lg(\xl+1+\yl)$.  The winner is large with a potential of $400+100\lg|x|$.  Thus, the increase $P$ in potential is
\begin{align*}P&\textstyle
= 100\frac\yl{\lg N}\lg(\xl+1+\yl) - 100\frac\yl{\lg N}\lg|y|\nl
= 100\frac\yl{\lg N}[\lg(\xl+1+\yl) - \lg|y|]\nl
= 100\frac\yl{\lg N}[\lg(\xl+1+\yl) - \lg(\yl+1+\yr)]\nl
\<100[\lg(\xl+1+\yl) - \lg(\yl+1+\yr)]\nl
< 100[\lg|x| - \lg(\yl+1+\yr)]\nl
< 100[\lg|x| - \lg\yr].
\end{align*}
Observe that when we sum over all such mixed-large pairings, we get a telescoping sum of the same form as the one that arose in the analysis of large-large pairings, and thus the combined potential increase for all such pairings is $O(\log n)$.  We now turn to the last case, where $\xl\<\lg N$.  The initial potential of $x$ is now $400+100\frac\xl{\lg N}\lg|x|$, and that of $y$ is $400+100\lg|y|$.  The potential of the winner of the pairing is $400+100\lg|x|$, and the potential of the loser is
$400+100\frac\xl{\lg N}\lg(\xl+\yl+1)$, so the increase $P$ in potential is
\begin{align*}P&\textstyle
= 100\lg|x| + 100\frac\xl{\lg N}\lg(\xl+\yl+1) - 100\lg|y| - 100\frac\xl{\lg N}\lg|x|\nl
< 100\lg|x| - 100\lg|y|\nl
< 100\lg|x| - 100\lg\yr.
\end{align*}
Summing over all such mixed-large pairings, this sum again telescopes in the same way as in the large-large case.  Thus, all mixed-large pairings combined cost only $O(\log n)$ potential.  However, unlike small-small, the actual cost can be far greater, and unlike large-large, we may not release enough node potential to pay for it.  What saves us is the edge potentials.

Call a mixed-large pairing \emph{normal} if $\yr>\lg N$.  Observe that during the first pass, at most one mixed-large pairing can be abnormal, because if $\yr\<\lg N$, all its right siblings (in the general view) are small nodes.  We show that three consecutive normal mixed-large pairings release at least 7 units of potential: enough to pay for those three pairings, with 4 units left over.  First, observe that the winner of a normal mixed-large pairing must be large.  Thus, for every three consecutive normal mixed-large pairings, at least two large siblings become adjacent that were not adjacent before, and thus some edge has its potential go from 0 to $-7$.  (We must of course be careful that this is not offset by some other edge nearby undergoing the opposite transition, but indeed, we are safe here, because the winner of the pairing is large.)  Thus, even though we don't release enough node potential to pay for all mixed large pairings, there is hope that if there are so many of them that they tend to be consecutive, the edge potentials can pay for them instead, while if there are not so many that they tend to be consecutive, perhaps they do not dominate the cost of the first pass.  We will soon see that this is indeed the case.

\paragraph{Bringing it all together.}  We can now calculate the total amortized runtime of delete-min.  The actual work done in the second pass is the same as that of the first pass, and the second pass causes at most a logarithmic increase in potential.  Thus, we must show that $\Delta\phi \< -2k+O(\log n)$, where $k$ is the number of pairings done by the first pass and $\Delta\phi$ is the change in potential due to the first pass.  We have $k=
\k_LL+\k_MM+\k_SS+\k_ML+\k_MS+\k_LS\<
\k_LL+\k_MM+\lg N+\k_ML+1=
\k_LL+\k_MM+\k_ML+O(\log n)$.
The increase in potential from the first pass is
$\Delta\phi=\p_LL+\p_MM+\p_SS+\p_ML+\p_MS+\p_LS \< O(\log n) - 393\k_LL - 150\k_MM$.  Summing the actual work with the node potential change, we obtain
$O(\log n) - 392\k_LL - 149\k_MM + \k_ML \< O(\log n) - 149(\k_LL + \k_MM) + \k_ML$.  Thus, the question is whether most of those terms cancel, leaving us with a $O(\log n)-k$ amortized cost.  There are two cases to consider, depending on how large $\k_ML$ is.  If at most $74\/75$ of all pairings done are mixed-large (that is, $\k_ML<\frac{74}{75} k$, or rather $\k_ML<\frac{74}{75}(\k_LL+\k_MM+\k_ML)$, or equivalently
$\k_ML < 74\k_LL + 74\k_MM$), then the amortized cost $C$ of the first pass is at most
\begin{align*}
C&\<O(\log n) - 149(\k_LL + \k_MM) + \k_ML\nl
\<O(\log n) - 149(\k_LL + \k_MM) + 74(\k_LL + \k_MM)\nl
= O(\log n) - 75(\k_LL + \k_MM)\nl
= O(\log n) - (74+1)(\k_LL + \k_MM)\nl
= O(\log n) - 74(\k_LL + \k_MM)+\k_LL + \k_MM\nl
\<O(\log n) - \k_ML + \k_LL + \k_MM\nl
\<O(\log n) - k.
\end{align*}
Since the second pass only increases the potential by $O(\log n)$ and its actual cost is $k$, the cost for the whole delete-min is $O(\log n)-k+k=O(\log n)$.

That leaves the case where more than $\frac{74}{75}$ of parings are mixed-large ones.  In fact, we will use the weaker assumption that at least $20\/21$ of the pairings in the first pass are mixed-large.  We divide the pairings into groups of three: the first three pairings, the second three, and so on.  If a group consists of only mixed-large pairings, then it releases 7 units of potential, for a total amortized cost of $-4$.  There are $k/3$ groups (give or take divisibility by three), and all but $k/21$ of those groups consist entirely of mixed-large pairings.  These $k/3 - k/21 = 6k/21 = 2k/7$ groups release $8k/7$ units of spare potential.  The remaining $k/21$ groups require $k/7$ units of potential to pay for them, leaving $k$ units of spare potential, which we use to pay for the second pairing pass.
\end{proof}

\section{Final words}
It would be interesting to extend this potential function to so that merging two pairing heaps is fast.  The amortized time using this potential function is, in the worst case, linear, because the two heaps might have sizes that are adjacent powers of 2 (e.g., 1024 and 2048), and thus the size potential of the new heap is linear in the combined size of the old heap, whereas the size potentials of the old heaps are both zero.

It would also be interesting to see whether a similar potential function can be made to work for splay trees; the one presented here does not.  In particular, a splay where every double-rotation is a zig-zag does not release enough potential if the node $x$ being accessed is large and all nodes on the path to $x$ are mixed, so the amortized cost of the splay would be super-logarithmic.

\end{document}